\newcommand\bone{\mathbf{1}}
\newcommand\bd{\mathbf{d}}
\newcommand\bdout{\mathbf{d}^\mathrm{out}}
\newcommand\bdin{\mathbf{d}^\mathrm{in}}
\newcommand\bx{\mathbf{x}}
\newcommand\bu{\mathbf{u}}
\newcommand\bv{\mathbf{v}}
\newcommand\bxL{\mathbf{x}_\mathrm{L}}
\newcommand\bxR{\mathbf{x}_\mathrm{R}}
\newcommand{\RR}{\mathbb{R}}
\newtheorem{theorem}{Theorem}[section]
\theoremstyle{remark}
\title{Centrality-Friendship Paradoxes:\\
 When Our Friends Are More Important Than Us}
\author{Desmond J. Higham\thanks{Department of Mathematics and Statistics, 
University of Strathclyde, Glasgow, UK, G1 1XH, 
(\texttt{d.j.higham@strath.ac.uk}). Supported by EPSRC/RCUK
Established 
Career Fellowship EP/M00158X/1 and by EPSRC Programme Grant 
EP/P020720/1.
\textbf{EPSRC Data Statement:} not required.
}
}
\date{\today}
\begin{document}

\maketitle

\begin{abstract}
The friendship paradox 
states that, on average, our friends have more friends than we do.
In network terms, 
the average degree over the nodes can never exceed the average  
degree over the neighbours of nodes.
This effect, which is a classic example of sampling bias, has
attracted much attention 
in the social science and network science literature, 
with variations and extensions of the paradox being 
defined, tested and interpreted.
Here, we show that a version of the paradox holds 
rigorously for eigenvector centrality:
on average, our friends are more important than us.
We then consider
general matrix-function centrality, including  
Katz centrality, 
and give sufficient conditions 
for the paradox to hold.
We also discuss which results can be generalized 
to the cases of 
directed 
and 
weighted 
edges.
In this way, we add theoretical support for 
a field that has largely been evolving through empirical 
testing.
\end{abstract}

\section{Motivation}
\label{sec:mot}

Consider the graph in Figure~\ref{Fig:simp}. Imagine that the nodes
represent people and the edges represent reciprocated friendships.
Nodes $1,2,3,4,5,6,7,8$ have 
$4,1,1,1,3,2,3,1$ friends, respectively. 
So the average number of friends possessed by a node is
$16/8 = 2$.
Now look at the friends of each node.
The four friends of node 1 possess $\{1,1,1,3\}$ friends.
Similarly for nodes $2,3,4,5,6,7,8$ we find 
$\{4\}$, 
$\{4\}$,
$\{4\}$,
$\{4, 2, 3\}$,
$\{3,3\}$,
$\{2,,3,1\}$
and
$\{3\}$, respectively.
So the average number of friends possessed by a friend is
$42/16 = 2.625$, which is greater than $2$.
This effect---that \emph{our friends have more friends than we do, on average}---was identified
by Feld \cite{Feld91} and has become known as the \emph{friendship paradox}.
Feld showed that the friend-of-friend average always dominates the friend average, with
equality if and only if all individuals have the same number of friends.
The paradox is a classic example of sampling bias. In Figure~\ref{Fig:simp},
node 1 has $4$ friends and hence appears $4$ times in the friend-of-friend sum, whereas 
node $2$ only contributes its value $1$ on a single occasion; 
in general, highly connected nodes have a greater influence on the sum 

\begin{figure}
\begin{center}
\includegraphics[width=8cm]{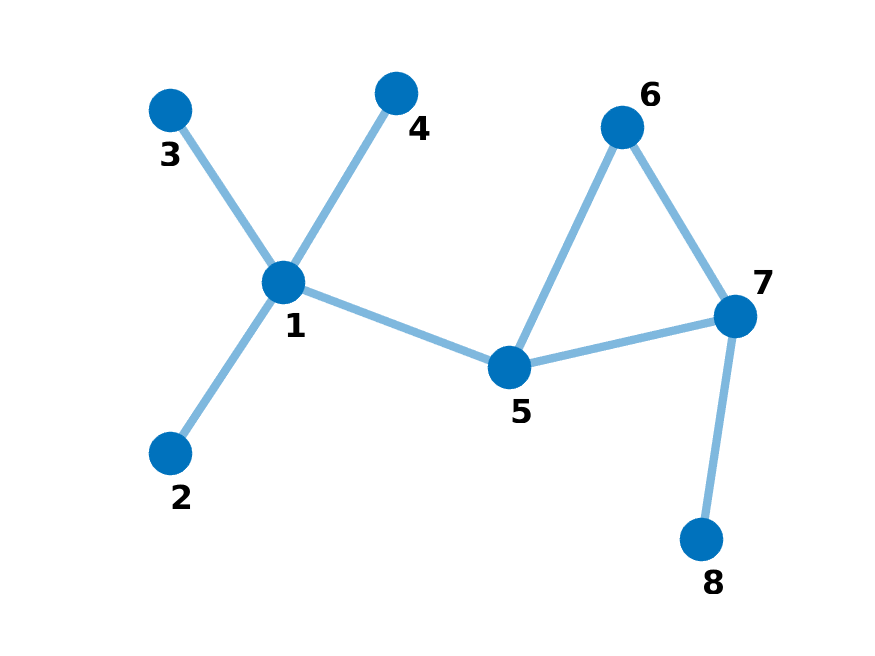}
\end{center}
  \caption{Simple undirected network with $8$ nodes.}
\label{Fig:simp}
\end{figure}
The friendship paradox has motivated much activity in
the social network literature, and is also mentioned regularly in the wider media;
see, for example, \cite{St12}.
Researchers have measured the extent to which the discrepancy holds on real networks
involving, for example, high school and university students 
\cite{Feld91,Gr14,ZJ01},
scientific coauthors
\cite{EJ14},
plants and pollinators
\cite{PMG17}
and
users of social media
\cite{Bollen2017,Hodas13icwsm,Kooti14icwsm}.
(We mention that some of these studies also 
looked at individual-level analogues, such as \lq\lq what proportion of nodes 
have fewer friends than the average over their friends?\rq\rq\
In this work we focus exclusively on 
the gobal averages used in the original reference \cite{Feld91}.)

Extending this idea, Eom and Jo \cite{EJ14} looked at the case where each node
may be quantified according to some externally derived attribute and 
studied the \emph{generalized friendship paradox}: on average, do our friends have 
more of this attribute than us?
They showed that the answer is yes for attributes that correlate positively with 
the number of friends, and found the effect to hold empirically for 
certain scientific collaboration networks in the case where the attribute 
was publication or citation count. 
Similarly, Hodas, Kooti and Lerman \cite{Hodas13icwsm} made empirical studies of 
Twitter networks and tested for a friend activity paradox (are our friends 
more active than us?) and for a 
virality paradox (do our friends spread more viral content than us?).

Our aim here is to study the generalized friendship paradox in the case where 
the attribute is
importance, as quantified by a network centrality measure.
Aside from the fact that centrality is a fundamental 
and informative nodal property \cite{Estradabook,Free,Newmanbook},
we also note that 
centrality measures are defined explicitly in terms of the network topology, and hence
there is potential to derive results that hold universally, or at least for
some well-defined classes of network.
This allows us to add further theoretical backing that complements the
recent data-driven studies mentioned above. 
Our results also alleviate the need for certain 
experiments.
For example, in \cite{Gr14} Grund 
tested the eigenvector centrality version of the generalized friendship paradox
on 
two small-scale friendship networks; Theorem~\ref{thm:ev} in Section~\ref{sec:ev} shows that
this paradox holds for \emph{all} networks.

This manuscript is organized as follows.
In Section~\ref{sec:fp} we 
set up some notation, formalize the  
friendship paradox and explain how it follows directly from the 
Cauchy--Schwarz inequality.
We also 
define the generalized friendship paradox from \cite{EJ14}, 
and show how it arises when the quantity of interest correlates with  
degree.
The new material starts in Section~\ref{sec:ev}, where we 
show that 
a paradox always holds for 
eigenvector centrality.
In Section~\ref{sec:mf} we consider other types of network centrality
based on matrix functions. 
Using a combinatorial result from \cite{LMSM83}
we show that the paradox holds for certain types of matrix function.
We also derive and interpret sufficient conditions for general 
matrix functions defined through power series with nonnegative coefficients, 
including the resolvent case corresponding to Katz centrality.
Sections~\ref{sec:ev}--\ref{sec:mf} deal with undirected, unweighted
networks. 
In Section~\ref{sec:dir} we look at directed networks, where the picture is
less straightforward.
We  
discuss various paradoxes that arise from the use of 
out-degree and in-degree, and 
give some sufficient conditions for centrality-based 
analogues.
We explain in Section~\ref{sec:weight} how all results 
extend readily to the case of nonnegatively weighted networks.
We conclude in  
Section~\ref{sec:disc} 
with an overview of the main results and an indication
of possible future lines of pursuit.

\section{Friendship Paradox and Generalized Friendship Paradox}
\label{sec:fp}

Suppose $A$ represents the adjacency matrix 
for an undirected, unweighted, network 
with $n$ nodes.
(Directed edges will be considered in 
Section~\ref{sec:dir}
and
weighted edges in
Section~\ref{sec:weight}.)
So $A \in \RR^{n \times n}$, with  $a_{ii} = 0$ and with 
$a_{ij} = a_{ji} = 1$ if nodes $i$ and $j$ are connected.
To avoid the trivial special case of an empty network, we assume at least one
edge exists.   
Letting $\bone \in \RR^{n}$ denote the vector with all components 
equal to one, we may define the degree vector 
\[
 \bd = A \bone,
\]
where $d_i$ 
gives the degree of node $i$.

We will make use of the two-norm and the one-norm, which for a vector
$\bx \in \RR^{n}$ are defined by 
\[
 \| \bx \|_2 = \sqrt{ \bx^T \bx } 
\quad \mathrm{and} \quad
 \| \bx \|_1 = \sum_{i=1}^{n} |x_i|,
\]
respectively. We note that in many cases we will be dealing with 
a nonnegatively valued vector $\bx$, whence the one-norm
reduces to the sum of the entries.

In this notation, the average degree over the nodes may be written
\[
  \frac{ \| \bd \|_1 }{n}.
\]
In the friendship paradox, we wish to compare this      
quantity 
with the average of the values that arise when we 
take each node, look at each of that node's neighbours, and record how many 
neighbours those neighbours have. 
When we do this count, each node $i$ appears as a neighbour 
$d_i$ times and each time it contributes $d_i$ neighbours, so the count totals 
$\bd^T \bd$.
The number of terms in the count is twice the number of edges,
which is $ \| \bd \|_1$. 
The overall friend-of-friend average is
therefore
\[
     \frac{ \bd^T \bd }{ \| \bd \|_1}.
\]
So 
the friendship paradox is equivalent to the inequality
\begin{equation}
     \frac{ \bd^T \bd }{ \| \bd \|_1}
 - 
  \frac{ \| \bd \|_1 }{n} 
  \ge 0.
\label{eq:fofpar}
\end{equation}

To see why 
(\ref{eq:fofpar})
is always true, we recall from the 
Cauchy--Schwarz inequality
\cite{HJ13}  
that 
for any  
$ \bu, \bv \in \RR^{n}$ we have 
\begin{equation}
     \bu^T \bv  \le \| \bu \|_2  \| \bv \|_2.
\label{eq:csineq}
\end{equation}
Taking $ \bu = \bone $, this implies 
\begin{equation}
  \| \bv \|_1 \le  \sqrt{n}  \,  \| \bv \|_2.
\label{eq:cstwo}
\end{equation}
For $\| \bv \|_1  \neq 0$, 
after squaring and rearranging we may write this
inequality in the form 
\begin{equation}
     \frac{ \bv^T \bv }{ \| \bv \|_1}
 - 
  \frac{ \| \bv \|_1 }{n} 
  \ge 0.
\label{eq:csres}
 \end{equation}
So we see that 
the friendship paradox inequality 
(\ref{eq:fofpar}) is always satisfied.

Further, 
equality holds in the 
Cauchy--Schwarz inequality
(\ref{eq:csineq})
if and only if 
$\bu $ is a multiple of $\bv$. 
So 
we have equality in 
the friendship paradox inequality 
(\ref{eq:fofpar}) 
if and only if 
the network is regular---all nodes have the same degree.

To define the generalized friendship paradox
\cite{EJ14}
a nonnegative quantity $x_i \ge 0$ is assigned to node $i$,
and
we compare the average over the nodes,
\[
  \frac{ \| \bx \|_1 }{n}, 
\]
with the average 
over neighbours of nodes,
\[
     \frac{ \bx^T \bd }{ \| \bd \|_1}.
\]
The numerator $ \bx^T \bd $ 
in the latter quantity arises because 
in the overall sum 
each node $i$ contributes 
its value $x_i$ 
a total of 
$d_i$ times. The denominator 
$ \| \bd \|_1$ 
arises because each edge is used twice.
Hence, we may say that a generalized friendship paradox 
with respect to the quantity $\bx$ arises if
\begin{equation}
     \frac{ \bd^T \bx }{ \| \bd \|_1}
 - 
  \frac{ \| \bx \|_1 }{n}
 \ge 
 0.
\label{eq:gfp}
 \end{equation}
Note that the original version 
(\ref{eq:fofpar}) corresponds to the case where $\bx$ is the degree 
vector.

Introducing the 
covariance between two vectors $\bu, \bv \in \RR^{n}$ as
\[
\mathrm{Cov}(\bu,\bv) = \frac{1}{n} \sum_{i=1}^{n} 
                         \left( x_i - \mu_x \right)
                         \left( y_i - \mu_y \right),
\]
where 
\[
 \mu_x =  \frac{1}{n} \sum_{i=1}^{n} x_i  
\quad
\mathrm{and}
\quad
 \mu_y =  \frac{1}{n} \sum_{i=1}^{n} y_i 
\]
denote the corresponding means, we may rewrite 
(\ref{eq:gfp}) as
\[
 \frac{1}{\mu_d} \mathrm{Cov}(\bd,\bx) \ge 0.
\]
Hence, as indicated in \cite{EJ14}, 
a generalized friendship paradox 
with respect to the quantity $\bx$ arises if
$\bx$ is nonnegatively correlated with degree.

Rather than considering an externally defined
attribute, as was done in the tests of
\cite{EJ14,Hodas13icwsm}, we will 
 look at circumstances where 
$\bx$ is a network 
centrality measure that quantifies the relative importance of 
each node. 
In this way we can address, 
in the same generality as the original work \cite{Feld91},
the question: 
\emph{are our friends more important than us, on average?}

\section{Eigenvector Centrality Paradox}
\label{sec:ev}

In this section we consider the case of eigenvector 
centrality \cite{B1,B2,Estradabook,Newmanbook,Vigna}. 
To make this centrality measure well-defined, 
we assume that the network is connected, and hence
the symmetric matrix $A$ is irreducible.
From Perron--Frobenius theory \cite{HJ13}, we know that
$A$ has a real, positive, dominant eigenvalue $\lambda_1$
that is equal to  $\| A \|_2$, the matrix two-norm of $A$.
The centrality measure is then given by the 
corresponding Perron--Frobenius eigenvector 
$\bx$, 
which satisfies 
$ A \bx = \lambda_1 \bx$ 
and 
has positive elements.

We have the following result.

\begin{theorem}
Given any connected network, 
the generalized friendship paradox 
inequality
(\ref{eq:gfp})
holds for eigenvector centrality, 
with equality 
if and only if the network is regular.
\label{thm:ev}
\end{theorem}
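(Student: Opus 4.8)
The plan is to use the symmetry of $A$ together with the eigenvalue equation $A \bx = \lambda_1 \bx$ to collapse the friend-of-friend term in (\ref{eq:gfp}) into a single scalar quantity. First I would rewrite the numerator $\bd^T \bx$: since $\bd = A \bone$ and $A$ is symmetric,
\[
\bd^T \bx = \bone^T A \bx = \lambda_1 \, \bone^T \bx = \lambda_1 \| \bx \|_1 ,
\]
where the final step uses that the Perron--Frobenius eigenvector $\bx$ has strictly positive entries, so its one-norm is the sum of its components. Likewise $\| \bd \|_1 = \bone^T A \bone$, because the degrees are nonnegative. Substituting these identities into (\ref{eq:gfp}) and dividing through by $\| \bx \|_1 > 0$ reduces the whole claim to the scalar inequality
\[
\lambda_1 \ge \frac{\bone^T A \bone}{n} ,
\]
whose right-hand side is exactly the average degree $\| \bd \|_1 / n$.

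To establish this inequality I would invoke the Rayleigh-quotient characterization of the dominant eigenvalue of the symmetric matrix $A$,
\[
\lambda_1 = \max_{\bv \neq 0} \frac{\bv^T A \bv}{\bv^T \bv} ,
\]
and evaluate the quotient at the test vector $\bv = \bone$. Since $\bone^T \bone = n$, this at once yields $\lambda_1 \ge (\bone^T A \bone)/n$, which is precisely the reduced inequality. Conceptually, the reduction reveals that the eigenvector-centrality paradox is equivalent to the classical fact that the spectral radius of a graph dominates its average degree.

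For the equality statement I would determine when the Rayleigh quotient at $\bone$ attains the maximum $\lambda_1$. For a symmetric matrix this happens if and only if $\bone$ is itself an eigenvector belonging to $\lambda_1$, that is $A \bone = \lambda_1 \bone$. Since $A \bone = \bd$, this reads $\bd = \lambda_1 \bone$, which says that every node has the same degree---the network is regular. Conversely, on an $r$-regular network $\bone$ is a positive eigenvector of $A$, so by Perron--Frobenius it is the eigenvector centrality and $\lambda_1 = r$, whence equality holds throughout.

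I expect this argument to be short, with no single serious obstacle: the crux is simply spotting the simplification $\bd^T \bx = \lambda_1 \| \bx \|_1$, after which the paradox is nothing more than the familiar bound that the spectral radius is at least the average degree. The only place needing a little care is the converse of the equality claim, where I would lean on connectivity (irreducibility of $A$) and Perron--Frobenius to confirm that on a regular graph the eigenvector centrality really is the uniform vector, so that equality is genuinely attained.
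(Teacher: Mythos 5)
Your proof is correct, and its skeleton is the same as the paper's: both arguments hinge on the identity $\bd^T \bx = \bone^T A \bx = \lambda_1 \| \bx \|_1$, which collapses the paradox inequality (\ref{eq:gfp}) to the scalar statement $\lambda_1 \ge \| \bd \|_1 / n$. Where you differ is in how that scalar bound is obtained. The paper chains two inequalities: $\lambda_1 = \| A \|_2 \ge \| A \bone / \sqrt{n} \|_2 = \| \bd \|_2/\sqrt{n}$ from the definition of the subordinate two-norm, followed by the Cauchy--Schwarz bound $\| \bd \|_1 \le \sqrt{n}\, \| \bd \|_2$. You instead invoke the Rayleigh-quotient characterization of $\lambda_1$ for symmetric $A$ and test it at $\bv = \bone$, getting $\lambda_1 \ge \bone^T A \bone / n$ in a single step. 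The two routes are close cousins (both are variational characterizations of $\lambda_1$ evaluated at the all-ones vector), but yours buys a slightly cleaner equality analysis: you have only one inequality whose equality case must be characterized, and for a symmetric irreducible nonnegative matrix the Rayleigh quotient attains $\lambda_1$ at $\bone$ exactly when $\bone$ lies in the (one-dimensional, by Perron--Frobenius) top eigenspace, i.e.\ when the graph is regular. The paper's version in principle has to track equality through both of its inequalities, and its stated equality argument is terser on this point. Your observation that the result is equivalent to the classical fact ``spectral radius dominates average degree'' is a nice conceptual bonus not made explicit in the paper's proof.
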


\begin{proof}
From the definition of the subordinate matrix two-norm we have
\begin{equation}
\lambda_1 = 
 \| A \|_2 \ge \| A  \frac{ \bone}{\sqrt{n}} \|_2 =  
   \| \frac{ \bd}{\sqrt{n}} \|_2.
\label{eq:anorm}
 \end{equation}
Using (\ref{eq:cstwo}), this implies 
\begin{equation}
\lambda_1
 \ge 
 \frac{1}{\sqrt{n}} \| \frac{ \bd}{\sqrt{n}} \|_1.
\label{eq:l1}
\end{equation}
Hence, 
\[
   \bd^T \bx = \bone^T A \bx = \bone^T \lambda_1 \bx
       = \lambda_1 \| \bx \|_1 \ge \frac{ \| \bd \|_1 }{n} \| \bx \|_1,
\]
and we see that 
(\ref{eq:gfp})
always holds.
Further, because 
$ \bx $
is the only eigenvector whose elements 
are all positive,  
we have equality in 
(\ref{eq:anorm})
if and only if 
$\bx $ is a multiple of $\bone$; that is, 
if and only if the network is regular.
\end{proof}

\section{Matrix Function Centrality Paradox}
\label{sec:mf}

We now move on to the case where 
$\bx$ is defined from a power series expansion
\begin{equation}
\bx = \left( c_0 I + c_1 A + c_2 A^2 + \cdots \right) \bone.
\label{eq:pws}
\end{equation}
Here, we assume that $c_k \ge 0$ for all $k$ and that these
coefficients have been chosen in such a way that the 
series converges. 
Centrality measures of this type have been studied by several 
authors, see, for example, 
\cite{BK13,BK15,E10,Estradabook,EHB12,EHSiamRev,Newmanbook,REG07,Vigna}.
They can be motivated from the combinatoric fact that 
$(A^k)_{ij}$ counts the number of distinct walks of 
length $k$ between $i$ and $j$.
Particular examples are
\begin{itemize}
 \item \emph{Katz centrality} \cite{Ka}, where $c_k = \alpha^k$. 
  Here  the real parameter $\alpha$ must be chosen such that 
  $0 < \alpha < 1/\rho(A)$, where $\rho(A)$ denotes the 
  spectral radius of $A$. In this case $\bx$ solves the 
    linear system  $(I - \alpha A) \bx = \bone$. 
  \item \emph{total centrality} \cite{BK13,BK15}, where
    $c_k = \beta^k/k!$ for some positive real parameter $\beta$.
   In this case the series converges for any $\beta$, and 
   $\bx$ may be written $\bx = \exp ( \beta A) \bone$.
   Other factorial-based coefficients have also been 
   proposed \cite{E10}. 
   \item \emph{odd and even centralities}
     based on odd and even power series, such as
    those for $\sinh$ and $\cosh$ \cite{REG07}.
\end{itemize}
We also note that 
degree centrality, on which the
the original friendship paradox is based,  
corresponds to 
$c_1 = 1$ in 
(\ref{eq:pws}) with all other coefficients equal to zero.

For the centrality measure $\bx$ in  (\ref{eq:pws}) we have 
\[
\| \bx \|_1 = 
             c_0 n + c_1 \bone^T A \bone + c_2 \bone^T A^2 \bone 
     + \cdots.
\]
  and 
\[ 
\bd^T \bx = 
 c_0 \bone^T A \bone + c_1 \bone^T A^2 \bone + 
         c_2 \bone^T A^3 \bone + \cdots.
\]
So the 
generalized friendship paradox
inequality
(\ref{eq:gfp})
may be written 
\begin{equation}
   \left(
 c_0  \bone^T A \bone + c_1 \bone^T A^2 \bone + 
         c_2 \bone^T A^3 \bone + \cdots \right)
  - 
   \left(
             c_0 n + c_1 \bone^T A \bone + c_2 \bone^T A^2 \bone 
     + \cdots \right) \frac{  \bone^T A \bone }{n}
 \ge 0.
\label{eq:pwsineq}
\end{equation}

By comparing terms in the two expansions, we arrive at the following 
sufficient condition.

\begin{theorem}
The generalized friendship paradox 
inequality
(\ref{eq:gfp})
holds for 
$\bx$ in 
(\ref{eq:pws}) if
\begin{equation}
\bone^T A^{k+1} \bone \ge 
                   \bone^T A^{k} \bone \, 
                \frac{ \bone^T A \bone}{n},
\label{eq:suff1a}
\end{equation}
for every $k \ge 1$ for which $c_k > 0$.
\label{thm:suff}
\end{theorem}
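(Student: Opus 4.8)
The plan is to prove the inequality directly from its expanded form (\ref{eq:pwsineq}) by a term-by-term comparison of the two power series. To keep the bookkeeping clean I would introduce the shorthand $m_k = \bone^T A^k \bone$ for $k \ge 0$, so that $m_0 = \bone^T \bone = n$ and $m_1 = \bone^T A \bone = \| \bd \|_1$. With this notation the quantity $\bd^T \bx$ is the convergent series $\sum_{k \ge 0} c_k m_{k+1}$, while $\| \bx \|_1 \, (m_1/n)$ is $(m_1/n) \sum_{k \ge 0} c_k m_k$. The left-hand side of (\ref{eq:pwsineq}) is exactly the difference of these two expressions.

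The key step is to observe that, because both series converge under the standing assumption on the coefficients $c_k$, their difference may be regrouped coefficient by coefficient, giving the single series $\sum_{k \ge 0} c_k \left( m_{k+1} - m_k \, m_1/n \right)$. I would then examine the individual terms. The $k = 0$ term is $c_0 \left( m_1 - m_0 \, m_1/n \right) = c_0 \left( m_1 - m_1 \right) = 0$, so it vanishes automatically; this is precisely why the hypothesis (\ref{eq:suff1a}) need only be imposed from $k = 1$ onwards. For $k \ge 1$, any term with $c_k = 0$ contributes nothing, and for each $k \ge 1$ with $c_k > 0$ the hypothesis (\ref{eq:suff1a}) states exactly that $m_{k+1} - m_k \, m_1/n \ge 0$, so that term is nonnegative because $c_k \ge 0$. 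Summing, the whole series is nonnegative, which is the assertion (\ref{eq:gfp}), and the proof is complete.

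The argument is essentially a rearrangement, so I do not expect a genuine obstacle; the only points demanding care are the two noted above, namely verifying that the $k=0$ contribution is identically zero (so the condition starts at $k = 1$ rather than $k = 0$) and justifying the regrouping of the two infinite series by their assumed convergence. The more substantive question, which lies outside the proof itself, is \emph{interpreting} the sufficient condition (\ref{eq:suff1a}): since $m_k$ counts closed and open walks through the network, one would like to read the inequality $m_{k+1} \ge m_k \, m_1/n$ as a statement about walk densities, and to identify natural classes of network for which it holds for all relevant $k$. That interpretation, rather than the verification, is where I would expect the real work to lie.
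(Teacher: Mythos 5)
Your proposal is correct and is essentially the same argument as the paper's: expand both sides as power series in $A$, observe that the $c_0$ term vanishes identically (since $\bone^T\bone = n$), and conclude by requiring each remaining coefficient $c_k\,(\bone^T A^{k+1}\bone - \bone^T A^k\bone\,\bone^T A\bone/n)$ to be nonnegative. Your added care about regrouping the convergent series is a reasonable refinement but does not change the route.
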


\begin{proof}
We see that
the term
on
the left hand side of 
(\ref{eq:pwsineq})
involving $c_0$
collapses to zero.
Generally, we may obtain a sufficient condition by 
asking for each individual term involving $c_k$ to be greater than 
or equal to zero, for all $k \ge 1$.
This leads to 
(\ref{eq:suff1a}).
\end{proof}

We note that the sufficient condition 
(\ref{eq:suff1a}) has a simple combinatoric interpretation:
\emph{the total number of walks of length $k+1$ 
must dominate
the product of the 
total number of walks of length $k$ and the average degree}.

To proceed we make use of the following result.
\begin{theorem}[Lagarias et al., 1983]
For any positive integers  $r$ and $s$ such that 
$r+s$ is even, we have
\begin{equation}
   \bone^T A^{r+s} \bone
   \ge 
   \frac{
   \bone^T A^{r} \bone \, 
   \bone^T A^{s} \bone } {n}.
\label{eq:lmsm}
\end{equation}
\label{thm:lag}
\end{theorem}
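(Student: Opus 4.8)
The plan is to diagonalize $A$ and reduce the claim to a moment inequality for a discrete probability distribution supported on the spectrum of $A$. Since the adjacency matrix is real symmetric, I would start from its spectral decomposition $A = \sum_{i=1}^{n} \lambda_i v_i v_i^T$, with real eigenvalues $\lambda_i$ and orthonormal eigenvectors $v_i$. Expanding $\bone$ in this eigenbasis gives $\bone^T A^k \bone = \sum_{i=1}^{n} (v_i^T \bone)^2 \lambda_i^k$. Writing $w_i = (v_i^T \bone)^2 \ge 0$, we have $\sum_i w_i = \| \bone \|_2^2 = n$ (because $\sum_i v_i v_i^T = I$), so the normalized weights $p_i = w_i / n$ form a probability distribution. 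With this notation $\bone^T A^k \bone / n = \sum_i p_i \lambda_i^k$, which I will read as the $k$-th moment $\EE[\lambda^k]$ of a random variable $\lambda$ taking the value $\lambda_i$ with probability $p_i$. Dividing (\ref{eq:lmsm}) through by $n$, the claim becomes exactly the covariance-type statement $\EE[\lambda^{r+s}] \ge \EE[\lambda^r]\,\EE[\lambda^s]$.

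The heart of the argument is then an association (Chebyshev sum) inequality, and this is precisely where the hypothesis that $r+s$ is even enters: for positive integers, $r+s$ even forces $r$ and $s$ to have the \emph{same} parity. I would split into the two cases accordingly. When $r$ and $s$ are both odd, the maps $\lambda \mapsto \lambda^r$ and $\lambda \mapsto \lambda^s$ are both nondecreasing on all of $\RR$. When $r$ and $s$ are both even, they are instead nondecreasing functions of the nonnegative quantity $\lambda^2$, via $\lambda^r = (\lambda^2)^{r/2}$ and $\lambda^s = (\lambda^2)^{s/2}$. In either case the two functions are comonotone on the support, that is $(\lambda_i^r - \lambda_j^r)(\lambda_i^s - \lambda_j^s) \ge 0$ for every pair $i,j$, since both factors share the sign of $\lambda_i - \lambda_j$ (odd case) or of $\lambda_i^2 - \lambda_j^2$ (even case).

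To finish I would invoke the weighted Chebyshev sum inequality, which follows immediately from the elementary identity $\sum_{i,j} p_i p_j (\lambda_i^r - \lambda_j^r)(\lambda_i^s - \lambda_j^s) = 2\big(\EE[\lambda^{r+s}] - \EE[\lambda^r]\,\EE[\lambda^s]\big)$. By the comonotonicity established above, the left-hand side is a sum of nonnegative terms, so the right-hand side is nonnegative, giving $\EE[\lambda^{r+s}] \ge \EE[\lambda^r]\,\EE[\lambda^s]$; multiplying back through by $n$ recovers (\ref{eq:lmsm}).

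The one genuinely delicate point, and the step I would treat most carefully, is the parity case split: the hypothesis cannot be dropped. If, say, $r$ is odd and $s$ is even, then $\lambda^r$ and $\lambda^s$ are no longer comonotone (one is odd in $\lambda$, the other even), the odd moment $\EE[\lambda^r]$ can be negative, and the inequality may genuinely fail. Thus the main obstacle is not any hard estimate but ensuring the reduction is set up so that the \emph{same-parity} structure is exactly what supplies comonotonicity — in particular, handling the even case by passing to the variable $\lambda^2$ rather than $\lambda$, so that both exponents become positive and the monotonicity is restored.
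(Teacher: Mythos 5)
Your proof is correct, but it is worth noting that the paper does not actually prove this statement at all: its entire ``proof'' is a pointer to \cite[Theorem~1]{LMSM83}, so any self-contained argument is necessarily a different route from what appears in the manuscript. Your spectral reduction is sound: since $A$ is real symmetric, $\bone^T A^k \bone = \sum_i (v_i^T\bone)^2\lambda_i^k$ with $\sum_i (v_i^T\bone)^2 = n$, so dividing by $n$ turns the claim into the moment inequality $\EE[\lambda^{r+s}] \ge \EE[\lambda^r]\,\EE[\lambda^s]$ for the spectral measure weighted by $p_i = (v_i^T\bone)^2/n$. The parity analysis is exactly right and is the crux: $r+s$ even forces $r,s$ to share parity, which is what makes $\lambda^r$ and $\lambda^s$ comonotone (directly in $\lambda$ when both are odd, and in the variable $\lambda^2 \ge 0$ when both are even), and the identity $\sum_{i,j} p_i p_j(\lambda_i^r-\lambda_j^r)(\lambda_i^s-\lambda_j^s) = 2\bigl(\EE[\lambda^{r+s}]-\EE[\lambda^r]\EE[\lambda^s]\bigr)$ then closes the argument. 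This is essentially the classical spectral/Chebyshev-association proof of the Lagarias--Mazo--Shepp--McKay inequality; what your version buys over the paper's bare citation is a transparent, self-contained derivation that also makes visible \emph{why} the hypothesis cannot be dropped (mixed parity destroys comonotonicity, consistent with the paper's own remark that \eqref{eq:lmsm} fails for $r=2$, $s=1$ on some connected graphs). One small point you could make explicit for completeness: the symmetry of $A$, i.e.\ undirectedness of the network, is essential to the spectral decomposition with real eigenvalues and orthonormal eigenvectors, which is consistent with the paper invoking this theorem only in the undirected setting of Section~\ref{sec:mf}.
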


\begin{proof}
See \cite[Theorem~1]{LMSM83}.
\end{proof}

In words, 
Theorem~\ref{thm:lag}
says that, for $r+s$ even, \emph{the 
total number of walks of length $r+s$ dominates the 
product of the 
total number of walks of length $r$ and the  
total number of walks of length $s$, scaled by 
the number of nodes, $n$}.

This theorem allows us to deal with odd power series:
\begin{theorem}
The generalized friendship paradox
inequality
(\ref{eq:gfp})
holds for
$\bx$ in
(\ref{eq:pws}) in the case where 
$c_k = 0$ for $k$ even, 
with equality if and only if the 
network is regular.
\label{thm:odd}
\end{theorem}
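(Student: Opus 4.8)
The plan is to combine the sufficient condition of Theorem~\ref{thm:suff} with the walk-counting inequality of Theorem~\ref{thm:lag}. Since $c_k = 0$ for every even $k$, the only coefficients that can be positive are those with $k$ odd, so by Theorem~\ref{thm:suff} it suffices to verify (\ref{eq:suff1a}) for odd $k$. For such $k$ the exponent $k+1$ is even, so Theorem~\ref{thm:lag} applies with $r = k$ and $s = 1$, yielding $\bone^T A^{k+1}\bone \ge (\bone^T A^k \bone)(\bone^T A \bone)/n$, which is exactly (\ref{eq:suff1a}). This settles the inequality in (\ref{eq:gfp}), and the only substantive work lies in the equality statement.

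The forward direction of the equality claim is immediate: if the network is regular then $A\bone = \lambda_1 \bone$, so $A^k\bone = \lambda_1^k \bone$ and $\bx$ is a scalar multiple of $\bone$; substituting into (\ref{eq:gfp}), equivalently noting that $\mathrm{Cov}(\bd,\bx)=0$ because $\bd$ is constant, gives equality. For the converse I would start from the expanded form (\ref{eq:pwsineq}): equality there means $\sum_{k\ \mathrm{odd}} c_k\bigl(\bone^T A^{k+1}\bone - (\bone^T A^k\bone)(\bone^T A\bone)/n\bigr) = 0$, a sum of nonnegative terms, so every term with $c_k>0$ must vanish. Assuming, as always, that the centrality is nontrivial (at least one coefficient positive), at least one odd $k$ has $c_k>0$, and for that $k$ we obtain $n\,\bone^T A^{k+1}\bone = (\bone^T A\bone)(\bone^T A^k\bone)$.

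To extract regularity from this single equality I would pass to the spectral decomposition of the symmetric matrix $A$. With an orthonormal eigenbasis $v_1,\dots,v_n$, eigenvalues $\lambda_i$, and weights $p_i = (v_i^T\bone)^2 \ge 0$ satisfying $\sum_i p_i = n$, one has $\bone^T A^m\bone = \sum_i p_i \lambda_i^m$. A routine symmetrization in the indices $i,j$ then gives $n\,\bone^T A^{k+1}\bone - (\bone^T A\bone)(\bone^T A^k\bone) = \tfrac12\sum_{i,j} p_i p_j (\lambda_i^k - \lambda_j^k)(\lambda_i - \lambda_j)$. For odd $k$ the map $t \mapsto t^k$ is strictly increasing, so each summand is nonnegative and vanishes precisely when $\lambda_i = \lambda_j$; hence the equality forces $\lambda_i = \lambda_j$ for all pairs $i,j$ with $p_i,p_j>0$.

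The final step turns \emph{all active eigenvalues coincide} into regularity. Because $A$ is irreducible, Perron--Frobenius provides a simple dominant eigenvalue $\lambda_1$ with an entrywise positive eigenvector, so its weight $p_1 = (v_1^T\bone)^2 > 0$ is active; the common active eigenvalue is therefore $\lambda_1$, and simplicity forces the only active eigenvector to be $v_1$. Thus $\bone$ is a multiple of $v_1$, i.e.\ $A\bone = \lambda_1\bone$, so every node has degree $\lambda_1$ and the network is regular. I expect the converse to be the main obstacle: the inequality is a one-line consequence of Theorem~\ref{thm:lag}, but the equality case cannot be read off from Theorem~\ref{thm:lag} (whose statement records no equality condition), and instead requires the spectral symmetrization together with the simplicity of the Perron eigenvalue.
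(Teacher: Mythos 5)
Your proof of the inequality itself is exactly the paper's: reduce to the sufficient condition (\ref{eq:suff1a}) via Theorem~\ref{thm:suff} and verify it for odd $k$ by taking $r=k$, $s=1$ in Theorem~\ref{thm:lag}. The forward direction of the equality claim (regular $\Rightarrow$ equality) also matches. Where you genuinely diverge is the converse. The paper argues contrapositively and locally: if the network is not regular, the $c_1$ term in (\ref{eq:pwsineq}) equals $c_1\bigl(\bd^T\bd - \|\bd\|_1^2/n\bigr)$, which is strictly positive by Cauchy--Schwarz, and all other terms are nonnegative by Lagarias et al. Your route instead shows that equality in \emph{any single} odd-$k$ term forces regularity, via the spectral identity $n\,\bone^T A^{k+1}\bone - (\bone^T A\bone)(\bone^T A^k\bone) = \tfrac12\sum_{i,j}p_ip_j(\lambda_i^k-\lambda_j^k)(\lambda_i-\lambda_j)$ with $p_i=(v_i^T\bone)^2$, the strict monotonicity of $t\mapsto t^k$ for odd $k$, and the simplicity and positivity of the Perron eigenpair; I checked the symmetrization and it is correct. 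This buys you something real: the paper's strictness argument silently needs $c_1>0$ (if $c_1=0$ but, say, $c_3>0$, the paper's proof only gives $\ge 0$ for the surviving terms and does not conclude strictness), whereas your argument covers that case, at the mild cost of assuming the power series is not identically zero on the odd coefficients. As a byproduct you also establish the equality case of (\ref{eq:lmsm}) for $s=1$ and odd $r$, which Theorem~\ref{thm:lag} as quoted does not record. The trade-off is length: the paper's converse is two lines, yours requires the full spectral decomposition machinery.
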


\begin{proof}
First, suppose the network is regular.
Let $\mathrm{deg}$ denote the common degree, so that 
$A \bone = \mathrm{deg} \bone$. Since $\bone $ is an 
eigenvector with positive entries, it must be the 
Perron--Frobenius eigenvector,
so 
$\mathrm{deg} = \lambda_1 = \| A \|_2$.
Then $\bone^T A^{k} \bone = \bone^T \lambda_1^{k} \bone
 = n \, \lambda_1^{k}$ for all $k \ge 0$. It follows that 
for each $c_k$, 
the term on the left hand side of 
(\ref{eq:pwsineq})
collapses to zero, giving equality.

Now suppose that the network is not regular.
On the left hand side of 
(\ref{eq:pwsineq}), the 
coefficient $c_1$ receives the factor 
$\bd^T \bd - \| \bd \|_1^2 /n$. This quantity relates to the 
original friendship paradox---see 
(\ref{eq:fofpar})---and is strictly positive by the Cauchy--Schwarz 
inequality.
To deal with the remaining terms, 
it is then 
enough to show that
(\ref{eq:suff1a}) holds for odd $k>1$.
This is done by taking 
$r = k $ and $s =1$ in 
(\ref{eq:lmsm}).
\end{proof}

The next result focuses on Katz centrality.

\begin{theorem}
Consider the case where 
$c_k = \alpha^k$ in 
(\ref{eq:pws}).
For any network there exists 
a value $\alpha^\star > 0$ such that 
the 
generalized friendship paradox 
inequality
(\ref{eq:gfp})
holds for 
all
parameter values $0 < \alpha < \alpha^\star$, 
with equality if and only if the 
network is regular.
\label{thm:alphastar}
\end{theorem}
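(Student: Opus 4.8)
The plan is to treat the left-hand side of the paradox inequality as a power series in $\alpha$ and to exploit the fact that its \emph{linear} term is exactly the original friendship-paradox quantity, which is strictly positive for non-regular networks. Writing $m_k = \bone^T A^k \bone$ for the number of walks of length $k$ and substituting $c_k = \alpha^k$ into (\ref{eq:pwsineq}), the quantity to be shown nonnegative becomes
\begin{equation}
D(\alpha) = \sum_{k=0}^{\infty} \alpha^k g_k, \qquad g_k = m_{k+1} - m_k \, \frac{m_1}{n},
\label{eq:Dexpand}
\end{equation}
which converges for $0 < \alpha < 1/\rho(A)$ since this is precisely the admissible Katz range. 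As already observed in the proof of Theorem~\ref{thm:suff}, the $c_0$ contribution collapses: here $g_0 = m_1 - n(m_1/n) = 0$, so $D(\alpha) = \alpha\, h(\alpha)$ with $h(\alpha) = g_1 + g_2 \alpha + g_3 \alpha^2 + \cdots$.

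First I would record the two boundary behaviours of $D$. If the network is regular then, as in the proof of Theorem~\ref{thm:odd}, $m_k = n\lambda_1^k$ for every $k$, whence $g_k = n\lambda_1^{k+1} - n\lambda_1^k\,\lambda_1 = 0$ and $D(\alpha) \equiv 0$, so equality holds for every admissible $\alpha$. If the network is not regular then the linear coefficient $g_1 = m_2 - m_1^2/n = \bd^T\bd - \|\bd\|_1^2/n$ is the original friendship-paradox expression from (\ref{eq:fofpar}), which is \emph{strictly} positive by the Cauchy--Schwarz inequality. Thus $h(0) = g_1 > 0$, and it remains only to show that this strict positivity persists for small $\alpha$.

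The key step is a continuity (equivalently, analyticity) argument for $h$ near $\alpha = 0$. Since $m_k = \bone^T A^k \bone \le \|A^k\|_2\,\|\bone\|_2^2 = n\lambda_1^k$ and $m_1/n \le \lambda_1$, each coefficient obeys $|g_k| \le m_{k+1} + m_k(m_1/n) \le 2n\lambda_1^{k+1}$. Hence the series defining $h$ has radius of convergence at least $1/\lambda_1 = 1/\rho(A) > 0$, so $h$ is continuous at $\alpha = 0$ with $h(0) = g_1 > 0$. Continuity then furnishes a threshold $\alpha^\star \in (0, 1/\rho(A)]$ with $h(\alpha) > 0$ for all $0 \le \alpha < \alpha^\star$, and therefore $D(\alpha) = \alpha\, h(\alpha) > 0$ for $0 < \alpha < \alpha^\star$. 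Combined with the regular case, this gives the paradox on $0 < \alpha < \alpha^\star$, strict precisely when the network is non-regular, which is the claimed equality characterization.

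The main obstacle is that the higher coefficients $g_k$ need not be nonnegative: Theorem~\ref{thm:lag} applied with $r=k$ and $s=1$ controls the sign of $g_k$ only when $k$ is odd, because $r+s$ must be even in (\ref{eq:lmsm}), so the even-index terms arising from the full Katz expansion $\sum_k \alpha^k A^k$ may contribute negatively. This is exactly why no single statement valid over the whole admissible range is available from sign control alone, and why the argument must be localized near $\alpha=0$, where the guaranteed positive linear term $g_1\alpha$ dominates the possibly-negative tail. I expect the delicate point to be making this domination rigorous while still asserting only the \emph{existence} of $\alpha^\star$; the cleanest route is the analyticity/continuity bound above rather than attempting a term-by-term sign analysis, which the even powers obstruct.
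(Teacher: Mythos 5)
Your proposal is correct and follows essentially the same route as the paper: expand the left-hand side of (\ref{eq:pwsineq}) in powers of $\alpha$, observe that the constant term vanishes and the linear coefficient is the strictly positive friendship-paradox quantity $\bd^T\bd - \|\bd\|_1^2/n$ for non-regular networks, and handle the regular case separately via $\bone^T A^k \bone = n\lambda_1^k$. The only difference is that you make the paper's implicit ``$+O(\alpha^2)$'' step fully rigorous with the explicit bound $|g_k| \le 2n\lambda_1^{k+1}$ and a continuity argument, which is a welcome but inessential refinement.
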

 
\begin{proof}
First, suppose the network is regular.
Recall that $\mathrm{deg}$ denotes the common degree, so that 
$\bd = \mathrm{deg} \bone$. Then
$\bx = \bone/(1-\alpha \mathrm{deg})$ is the unique solution to the  
Katz centrality equation $(I - \alpha A) \bx = \bone$ for all $0<\alpha<1/\mathrm{deg}$.
Because $\bx$ is a multiple of the degree vector, we have 
equality in 
(\ref{eq:gfp}).

Now suppose that the network is not regular, so 
$\bd$ is not a multiple of $\bone$. 
With $c_k = \alpha^k$, and $\alpha$ small, 
the left hand side of  
(\ref{eq:pwsineq}) may be expanded
as 
\[
 \alpha \left( \bd^T \bd - \frac{ \| \bd \|_1^2 }{n} \right)
+ O(\alpha^2),
\] 
and we see that the factor in parentheses is strictly positive.
\end{proof}

\section{Directed Networks}
\label{sec:dir}

In this section we consider the case of unweighted 
directed networks, so $A$ is no longer assumed to be 
symmetric.
To be concrete when discussing results, 
we imagine
that the network represents 
human-human follower relationships on  
a social media platform. So  
an edge from $i$ to $j$,
represented by $a_{ij} = 1$, 
indicates that \emph{person $i$ follows person $j$}.

We define the out-degree vector and in-degree vector by
\[
 \bdout = A \bone 
\qquad
\mathrm{and}
\qquad
 \bdin = A^T \bone,
\]
respectively.
Hence,
$d^\mathrm{out}_i$ 
counts the number of 
people that person $i$ follows, and 
$d^\mathrm{in}_j$ 
counts the number of 
people who follow person $j$. 
Note that 
\[
 \| \bdout \|_1 = \| \bdin \|_1 = \sum_{i=1}^n\sum_{j=1}^n a_{ij}.
\]

Our first observation is that the inequality 
(\ref{eq:csres}) holds for any nonzero vector 
$\bv$, with equality if and only if $\bv$ is a multiple of $\bone$. 
Hence we have 
\begin{equation}
 \frac{ {\bdout}^T \bdout }{ \| \bdout \|_1}
 - 
  \frac{ \| \bdout \|_1 }{n} 
  \ge 0
\quad
\mathrm{and}
\quad
 \frac{ {\bdin}^T \bdin }{ \| \bdin \|_1}
 - 
  \frac{ \| \bdin \|_1 }{n} 
  \ge 0.
\label{eq:outoutinin}
\end{equation}
A little care is needed when interpreting these inequalities. 
The total ${\bdout}^T \bdout$ arises if we take  
each 
person in turn, look at the people who follow them, 
and record how much following these people do. 
(In this way, each node $k$ shows up $d^\mathrm{out}_k$ times
and each time it contributes an amount $d^\mathrm{out}_k$.)
Similarly, ${\bdin}^T \bdin$ arises if we take 
each
person in turn, look at the people who they follow,
and record how many times these people are followed.
(In this way, each node $k$ shows up $d^\mathrm{in}_k$ times
and each time it contributes an amount $d^\mathrm{in}_k$.)

In words, it is always true that 
\begin{description}
 \item[i)] \emph{our followers follow 
   at least as many people as us, on average} 
  (and there is equality if and only if
  everybody follows the same number of people), and
 \item[ii)] \emph{the people we follow have 
at least as many followers 
as us, on average} (and there is equality if and only if
 everybody has the same number of followers).
\end{description}

From the discussion in Section~\ref{sec:fp}, we also see that 
the in-out/out-in analogue of 
(\ref{eq:outoutinin}) and corresponding 
statements are valid only if
$\mathrm{Cov}(\bdout,\bdin) \ge 0$.
Simple examples where 
$\mathrm{Cov}(\bdout,\bdin)$ is negative include 
the outward star graph where the only edges start at node $1$ and end at nodes $2,3,\ldots,n$,
  for which
\[
 \bdout
 = 
 \left[ 
 \begin{array}{c}
   n-1 \\ 0 \\ 0 \\ \vdots \\ 0
 \end{array}
 \right]
\quad
\mathrm{and}
\quad 
 \bdin
 = 
 \left[ 
 \begin{array}{c}
   0 \\ 1 \\ 1 \\ \vdots \\ 1
 \end{array}
 \right],
\]
and also the corresponding inward star graph. 
A strongly connected example has edges from node $1$ to nodes 
$2,3,\ldots,n$ and from node $i$ to node $i+1$ for $i = 2,3,\ldots,n-1$, plus an edge from node $n$ back to 
node $1$. 
Here, we have 
\begin{equation}
 \bdout
 = 
 \left[ 
 \begin{array}{c}
   n-1 \\ 1 \\ 1 \\ \vdots \\ 1 \\ 2
 \end{array}
 \right]
\quad
\mathrm{and}
\quad 
 \bdin
 = 
 \left[ 
 \begin{array}{c}
   1 \\ 2 \\ 2 \\ \vdots \\\vdots \\ 2
 \end{array}
 \right]. 
\label{eq:star2}
\end{equation}
In this case,  
$ {\bdout}^T \bdin = 3n-1$ and $\| \bdout \|_1 = \| \bdin \|_1 =  2n-1$, so
${\bdout}^T \bdin - \| \bdout \|_1 \| \bdin \|_1 /n = - n + 3 - 1/n$, which is negative for
$n \ge 3$. 
Hence, for such graphs it is \textbf{not true} that
\begin{description}
 \item[iii)] \emph{our followers have at least as many followers
   as us, on average}, or
 \item[iv)] \emph{the people we follow are following
at least as many people 
as us, on average}.
\end{description}

The reference \cite{Hodas13icwsm}
is unusual in that it tests the friendship paradox on 
directed networks.
The authors consider the four versions i)--iv)  
and find that the paradox holds in each case for a large social network  constructed from
Twitter data.
Our reasoning above shows that two of these versions will hold for \emph{all} directed networks. 

For an arbitrary network measure $\bx$
the relevant inequality that describes the  
out-degree version of the generalized 
friendship paradox 
(\ref{eq:gfp})
is
\begin{equation}
     \frac{ {\bdout}^T \bx }{ \| \bdout \|_1}
 - 
  \frac{ \| \bx \|_1 }{n}
 \ge 
 0.
\label{eq:gfpout}
 \end{equation}
Similarly, the in-degree version in 
\begin{equation}
     \frac{ {\bdin}^T \bx }{ \| \bdin \|_1}
 - 
  \frac{ \| \bx \|_1 }{n}
 \ge 
 0.
\label{eq:gfpin}
 \end{equation}

We now consider eigenvector centrality as our network measure.
We assume that the network is strongly connected so that $A$ is irreducible.
In this directed case, 
we have potentially distinct left and right 
Perron--Frobenius eigenvectors, which we 
denote 
$\bxL$ and $\bxR$, respectively.
Here,
$ A \bxR = \lambda_1 \bxR$ 
and
$ \bxL^T A  = \lambda_1 \bxL^T$,
with $\lambda_1 = \rho(A)$.
Both vectors $\bxL$ and $\bxR$
have positive components.

The next result characterizes two cases.

\begin{theorem}
For a strongly connected directed network 
the out-degree generalized friendship paradox  inequality 
(\ref{eq:gfpout}) 
holds for the case where
$\bx = \bxL$ if and only if
\begin{equation}
 \lambda_1 \ge \frac{ \sum_{i=1}^n \sum_{j=1}^{n} a_{ij}}{n}.
\label{eq:l1ineq}
\end{equation}
Similarly, (\ref{eq:l1ineq}) also characterizes 
the in-degree generalized friendship paradox
(\ref{eq:gfpin}) where
$\bx = \bxR$.
\label{thm:evinout}
\end{theorem}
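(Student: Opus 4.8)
The plan is to reduce each of the two claimed equivalences to the single scalar inequality (\ref{eq:l1ineq}) by exploiting the defining eigenvector relations, after which the ``if and only if'' falls out for free from the strict positivity of the Perron--Frobenius eigenvectors.

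First I would rewrite the numerator of (\ref{eq:gfpout}) in the case $\bx = \bxL$. Since $\bdout = A\bone$ and, transposing the left-eigenvector identity $\bxL^T A = \lambda_1 \bxL^T$, we have $A^T \bxL = \lambda_1 \bxL$, it follows that
\[
{\bdout}^T \bxL = (A\bone)^T \bxL = \bone^T A^T \bxL = \lambda_1 \bone^T \bxL = \lambda_1 \| \bxL \|_1,
\]
where the last step uses that $\bxL$ has strictly positive entries, so $\bone^T \bxL = \| \bxL \|_1$. Substituting this into (\ref{eq:gfpout}) and factoring out $\| \bxL \|_1$ turns the paradox inequality into
\[
\| \bxL \|_1 \left( \frac{\lambda_1}{\| \bdout \|_1} - \frac{1}{n} \right) \ge 0.
\]

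Next I would observe that $\| \bxL \|_1 > 0$, so the bracketed factor alone controls the sign and we may divide through by this positive quantity. This makes (\ref{eq:gfpout}) equivalent to $\lambda_1 / \| \bdout \|_1 \ge 1/n$, i.e.\ $\lambda_1 \ge \| \bdout \|_1 / n$; recalling that $\| \bdout \|_1 = \sum_{i=1}^n \sum_{j=1}^n a_{ij}$, this is exactly (\ref{eq:l1ineq}). Because every manipulation is reversible, both directions of the ``if and only if'' are secured at once.

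Finally, the in-degree statement is handled by the mirror-image computation: for $\bx = \bxR$ one uses $\bdin = A^T \bone$ together with the right-eigenvector relation $A \bxR = \lambda_1 \bxR$ to obtain ${\bdin}^T \bxR = \bone^T A \bxR = \lambda_1 \| \bxR \|_1$, and the identity $\| \bdout \|_1 = \| \bdin \|_1$ shows that the resulting threshold is again (\ref{eq:l1ineq}). There is no genuinely hard step here; the only point requiring care is the \emph{matching} of out-degree with the left eigenvector (and in-degree with the right eigenvector), which is precisely what lets $\bdout = A\bone$ pair with $A^T \bxL$ and collapse to the eigenvalue. I would also remark that the opposite pairings (out-degree with $\bxR$, or in-degree with $\bxL$) do not simplify in this way, so the theorem is specific to these two compatible combinations.
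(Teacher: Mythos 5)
Your proposal is correct and follows essentially the same route as the paper's own proof: both compute ${\bdout}^T \bxL = \bone^T A^T \bxL = \lambda_1 \| \bxL \|_1$, cancel the positive factor $\| \bxL \|_1$ to reduce (\ref{eq:gfpout}) to (\ref{eq:l1ineq}), and handle the in-degree case by replacing $A$ with $A^T$. Your added remark about why only these two eigenvector/degree pairings collapse is a nice touch but not needed for the argument.
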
 

\begin{proof}
When $\bx = \bxL$ we have
\[
     {\bdout}^T \bx  = \bone^T A^T \bxL = 
  \bone^T \lambda_1 \bxL = \lambda_1 \| \bxL \|_1.
\]
It follows that  
(\ref{eq:gfpout}) 
reduces to 
(\ref{eq:l1ineq}).

The second statement may be proved by replacing $A$ with $A^T$.
\end{proof}
 
It is of interest to note that for our unsymmetric $A$, 
a classical result is that
$\lambda_1$ lies between the minimum out-degree or in-degree and the maximum 
out-degree or in-degree
\cite[Theorem~8.1.22]{HJ13}. However,  it is not true in general that   
$\lambda_1$ dominates the average in-degree (and hence
average out-degree).
An example is given by the strongly connected graph with
adjacency matrix
\[
A = 
\left[
 \begin{array}{ccc}
    0 & 1 & 1\\
    0 & 0 & 1 \\
    1 & 0 & 0
  \end{array}
 \right].
\]
In this case $\lambda_1$ is the real root of
$\lambda^3 - \lambda - 1$, which has the form 
\[
 \lambda_1 = \frac{-1}{3} \left( C + \frac{3}{C} \right),
\quad
\mathrm{~for~}
\quad
C = \sqrt[3]{\frac{-27 + \sqrt{621}}{2}},
\]
where $\sqrt[3]{\cdot}$ denotes the real cube root.
Here  $\lambda_1 \approx 1.3247$, which is
strictly below the average out/in degree of $4/3$.
Hence, for such networks Theorem~\ref{thm:evinout} shows that 
neither the out-degree generalized friendship paradox 
for
$\bx = \bxL$
nor the
 in-degree generalized friendship paradox
for
$\bx = \bxR$ applies. 

Matrix function based centrality measures of the form (\ref{eq:pws})
continue to make sense for directed networks.
Here, the entry 
$(A^k)_{ij}$ counts the number of distinct directed walks of 
length $k$ from $i$ to $j$.
We will focus on the Katz case, where 
\begin{equation}
 \bx = \left( I + \alpha A + \alpha^2 A + \cdots \right) \bone.
\label{eq:kundir}
\end{equation}

Let us first check how this measure correlates with in-degree.
The relevant difference 
${\bdin}^T \bx - \| \bdin \|_1 \, \| \bx \|_1 /n$ 
from (\ref{eq:gfpin}) 
then takes the form
\[
 \bone^T A \left( I + \alpha A + \alpha^2 A^2 + \cdots \right) \bone 
                          - \bone^T  \left( I + \alpha A + \alpha^2 A^2 + \cdots \right) \bone 
                             \frac{ \bone^T A \bone}{n}.
\]
In terms of powers of $\alpha$, the zeroth order term vanishes and the first order term is
\begin{equation}
 \alpha \left( \bone^T A^2 \bone - \frac{ (\bone^T A \bone)^2 }{n} \right).
\label{eq:undiff}
\end{equation}
In words, we are comparing the  
 total number of directed walks of length two with the square of the total number of directed walks of length one, scaled by the number of nodes. 
This difference can be negative---for example,
the graph that was used to give 
(\ref{eq:star2}) 
produces $\bone^T A^2 \bone - (\bone^T A \bone)^2 /n = -n + 3 - 1/n$.
Hence, the corresponding generalized friendship paradox fails on this example 
for small $\alpha$ when $n \ge 3$.

We show next that is possible to prove something positive for the alternative 
out-degree case.

\begin{theorem}
Consider the out-degree generalized friendship paradox inequality  
(\ref{eq:gfpout}) 
in the 
Katz case 
(\ref{eq:kundir}). 
For any network there exists 
a value $\alpha^\star > 0$ such that 
the 
inequality
holds for 
all
parameter values $0 < \alpha < \alpha^\star$, 
with equality if and only if the network has a common out degree.
Further, 
a sufficient condition for the inequality to hold for all  $0 < \alpha < 1/\rho(A)$ 
is
\begin{equation}
\bone^T A^T A^{k} \bone \ge 
                   \bone^T A^{k} \bone \, 
                \frac{ \bone^T A \bone}{n}. 
\label{eq:suff1}
\end{equation}
\label{thm:Kasmall}
\end{theorem}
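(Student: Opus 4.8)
The plan is to follow the pattern of the proofs of Theorems~\ref{thm:suff} and~\ref{thm:alphastar}, expanding the left-hand side of (\ref{eq:gfpout}) as a power series in $\alpha$ and inspecting it term by term, while being careful about the asymmetry of $A$. Since $\| \bdout \|_1 = \bone^T A \bone > 0$, I may multiply (\ref{eq:gfpout}) through by this quantity without affecting the direction of the inequality. Substituting the Katz series $\bx = \sum_{k\ge 0}\alpha^k A^k \bone$ from (\ref{eq:kundir}) and using ${\bdout}^T = \bone^T A^T$ then reduces the claim to
\begin{equation}
\sum_{k \ge 0} \alpha^k \left( \bone^T A^T A^k \bone - \bone^T A^k \bone \, \frac{\bone^T A \bone}{n} \right) \ge 0 .
\label{eq:plan-expand}
\end{equation}
Abbreviating the $k$th coefficient by $\gamma_k = \bone^T A^T A^k \bone - \bone^T A^k \bone \,(\bone^T A \bone)/n$, the first step is to observe, exactly as in the proof of Theorem~\ref{thm:suff}, that the $k=0$ term collapses: since $\bone^T A^T \bone = \bone^T A \bone$ and $\bone^T A^0 \bone = n$, we get $\gamma_0 = \bone^T A \bone - \bone^T A \bone = 0$.

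Next I would isolate the coefficient $\gamma_1 = \bone^T A^T A \bone - (\bone^T A \bone)^2/n$. The key point is that $\bone^T A^T A \bone = (A\bone)^T(A\bone) = \bdout^T \bdout = \| \bdout \|_2^2$ is a \emph{genuine squared two-norm}, so that $\gamma_1 = \bdout^T\bdout - \| \bdout \|_1^2/n \ge 0$ by the Cauchy--Schwarz form (\ref{eq:csres}) applied with $\bv = \bdout$, with strict inequality unless $\bdout$ is a multiple of $\bone$. This is precisely why the out-degree version succeeds where the in-degree version fails: for the in-degree case the analogous leading coefficient is the cross term $\bdin^T \bdout$ of (\ref{eq:undiff}), which the example following (\ref{eq:undiff}) shows can be negative. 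With $\gamma_1$ in hand, the first claim of the theorem follows quickly. If the network has a common out-degree, then $A\bone = \mathrm{deg}\,\bone$, hence $A^k \bone = \mathrm{deg}^k \bone$ and every $\gamma_k$ vanishes, giving equality in (\ref{eq:plan-expand}) for all $0 < \alpha < 1/\rho(A)$. If not, then $\gamma_1 > 0$; since the series in (\ref{eq:plan-expand}) converges for $0 < \alpha < 1/\rho(A)$ and equals $\gamma_1\,\alpha + O(\alpha^2)$ as $\alpha \to 0^+$, it is strictly positive for all sufficiently small $\alpha$, which supplies the required $\alpha^\star$.

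For the sufficient-condition part I would argue exactly as in Theorem~\ref{thm:suff}: condition (\ref{eq:suff1}) is the statement $\gamma_k \ge 0$, so if it holds for every $k \ge 1$, then because $\alpha^k \ge 0$ every summand of (\ref{eq:plan-expand}) is nonnegative (the $k=0$ summand being zero), and the inequality holds on the entire interval $0 < \alpha < 1/\rho(A)$. I expect the only genuinely delicate point to be the asymmetric bookkeeping, namely correctly matching $\bone^T A^T A \bone$ with the squared norm $\| \bdout \|_2^2$ so that the nonsymmetric Cauchy--Schwarz inequality (\ref{eq:csres}) applies, together with nailing down the equality direction; controlling the $O(\alpha^2)$ tail in the $\alpha^\star$ argument is then routine, since convergence of the series on $0<\alpha<1/\rho(A)$ is already guaranteed by the Katz construction.
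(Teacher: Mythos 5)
Your proposal is correct and follows essentially the same route as the paper: expand the difference $ {\bdout}^T \bx - \| \bdout \|_1 \| \bx \|_1 /n$ in powers of $\alpha$, identify the leading coefficient as $\bdout^T\bdout - \|\bdout\|_1^2/n$ and apply Cauchy--Schwarz, handle the common-out-degree case by showing every coefficient vanishes, and read off the sufficient condition (\ref{eq:suff1}) termwise. Your explicit remarks that $\gamma_0=0$ and that $\bone^T A^T A \bone$ is a genuine squared two-norm (unlike the cross term in the in-degree case) are slightly more detailed than the paper's presentation but do not constitute a different argument.
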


\begin{proof}
The relevant difference is 
${\bdout}^T \bx - \| \bdout \|_1 \, \| \bx \|_1 /n$, which takes the form
\begin{equation}
 \bone^T A^T \left( I + \alpha A + \alpha^2 A^2 + \cdots \right) \bone 
                          - \bone^T \left( I + \alpha A + \alpha^2 A^2 + \cdots \right) \bone 
                             \frac{ \bone^T A \bone}{n}.
\label{eq:boutdiff}
\end{equation}
The sufficient condition 
(\ref{eq:suff1}) follows by considering powers of $\alpha$.

Now, suppose the network has a common out degree, so
$A \bone = \mathrm{deg} \bone$ for some value 
$\mathrm{deg}$.
Then $\bone$ must be the Perron--Frobenius right eigenvector, so
$A \bone = \lambda_1 \bone$, where $\lambda_1$ is the 
Perron--Frobenius eigenvalue.
In this case,
$\bone^T A^T A^k \bone = n \lambda_1^{k+1}  = \bone^T A^k \bone \bone^T A \bone$.
So we have the stated equality. 

Now, suppose the network does not have a common out degree, so
$A \bone$
is not a  multiple of $\bone$.
For small $\alpha$, the leading term in 
(\ref{eq:boutdiff}) may be written 
\[
\alpha \left(  {\bdout}^T \bdout  - \frac{ \| \bdout \|_1^2}{n} \right),
\]
which is positive by Cauchy--Schwarz. Hence a suitable $\alpha^\star$ exists.
\end{proof}

The Katz centrality measure (\ref{eq:kundir}) 
assigns to node $i$ a weighted sum of all directed walks starting from node $i$. In a message-passing context, this measure rewards nodes that are able to 
\textbf{broadcast} information effectively.
In the limit $\alpha \to 0$ this measure approaches (a shifted version of) the 
out-degree. 
 As an alternative, we could replace $A$ by $A^T$ in 
   (\ref{eq:kundir}). This measure 
assigns to node $i$ a weighted sum of all directed walks finishing at node $i$, 
 thereby rewarding nodes that are able to 
\textbf{receive} effectively.
In the limit $\alpha \to 0$ this measure approaches (a shifted version of) the 
in-degree. Analogous versions of the conclusions that follow  
(\ref{eq:undiff})
and the statement of 
Theorem~\ref{thm:Kasmall} 
are then valid with 
$A$ replaced by $A^T$ and \lq\lq in\rq\rq\ and \lq\lq out\rq\rq\ swapped.

We also note that 
part of Theorem~\ref{thm:Kasmall} extends to the 
 general
power series centrality 
measure
$ \bx = \left( c_0 I + c_1 A + c_2 A^2 + \cdots \right) \bone$---requiring 
(\ref{eq:suff1}) 
to hold 
for every $k \ge 1$ for which $c_k > 0$
serves as a sufficient condition.

\section{Weighted Networks}
\label{sec:weight}

The results in the previous sections, including 
\cite[Theorem~1]{LMSM83},
do not require the network to be unweighted.
So the conclusions extend to nonnegatively weighted networks if we are willing to 
use the 
formulations
(\ref{eq:fofpar}) and 
(\ref{eq:outoutinin})
for the friendship paradox and 
(\ref{eq:gfp}),
(\ref{eq:gfpout}) and 
(\ref{eq:gfpin}) 
 for the generalized friendship paradox, with 
the degree vectors having the same definitions:
$\bd = A \bone$, 
$\bdout = A \bone$
$\bdin = A^T \bone$.
In this extended setting, the degree vectors 
represent sum of weights rather than edge counts, and we  
note that the inequalities are invariant to positive rescaling of degree, so we may assume
without loss of generality that 
$\bd$, $\bdout$ or $\bdin$ have elements that sum to unity.
Similarly, we may assume that $\bx$ also sums to unity.
The friendship and generalized friendship paradoxes then 
apply if \lq\lq average\rq\rq\ is interpreted as \lq\lq weighted average\rq\rq.

\section{Discussion}
\label{sec:disc}

The friendship paradox has spawned a range of activity in quantitative network science, 
and it has been argued that the effect may explain reports of increasing levels of
dissatisfaction in 
online social interaction 
\cite{Bollen2017} and 
may be  
systematically distorting our perceptions and behaviours \cite{Ja}. 
It has also been shown that the paradox may be leveraged 
in order to detect the spread of information or disease, and to drive effective
interventions 
\cite{CF10, GMCCF14,Feld91,KKF18,KS18,PMG17}.
Our main result, 
Theorem~\ref{thm:ev}, shows that the paradox holds with 
the same level of generality when we consider \emph{importance},
as quantified by the classical and widely adopted eigenvector centrality measure.
Hence, our work adds further support to the argument that
this type of sampling bias is both highly relevant and ripe for 
exploitation.

It is of interest to note that the original friendship paradox is based on a purely \emph{local} 
quantity---the number of immediate neighbours.  
Theorem~\ref{thm:ev} shows that 
the effect is also present for a \emph{global} quantity that takes account of long range
interactions.
Indeed the walk-based Katz centrality measure,  
(\ref{eq:pws}) with $c_k = \alpha^k$,
interpolates between these two extremes: $\alpha \to 0$ from above
reduces to degree and $\alpha \to 1/\rho(A)$ from below
becomes eigenvector centrality
  \cite{BK15,Vigna}. Theorem~\ref{thm:alphastar} 
shows that Katz maintains the paradox for sufficiently small $\alpha$, but
it is an open question as to whether there is an undirected network for which 
the paradox fails to hold for some 
$ 0 < \alpha <  1/\rho(A)$.

We note that \cite{LMSM83} gives a concrete example of a connected network
on which the total number of walks of length three is strictly less than the product of the 
 total number of walks of length two and the average degree; so the inequality
(\ref{eq:lmsm}) is violated for $r=2$ and $s=1$. 
It follows that by taking $c_0 = c_2 = 1$ and the remaining $c_k$ sufficiently small, we can 
construct a centrality measure 
(\ref{eq:pws}) based on a power series with positive coefficients for which the generalized
friendship paradox fails to hold.
This raises the question of categorizing those power series that never give rise to such 
counterexamples. Theorem~\ref{thm:odd} shows that odd power series are one such class. 

In \cite{LMSM83} it is also stated that for any given network the 
inequality (\ref{eq:lmsm}), and hence the 
sufficiency condition 
(\ref{eq:suff1a}), holds for large enough $r+s$. This is entirely consistent with 
the eigenvector result in Theorem~\ref{thm:ev}---increasing 
$\alpha$ in Katz centrality emphasizes longer walks, and the $\alpha \to 1/\rho(A)^{-}$ 
limit corresponds to the eigenvector case \cite{BK15,Vigna}.

To the best of our knowledge, extensions of the friendship 
paradox to directed networks had 
only been studied empirically, as in 
\cite{Hodas13icwsm}.
In Section~\ref{sec:dir} we clarified that two 
of the four out/in degree versions always hold, while the other two
may fail. 
For example, when we allow for a lack of reciprocation it remains the case that  
\emph{the people we admire have more admirers than us, on average}
(something many of us first discovered at high school),
and, 
for the same reason, 
\emph{people we hate are hated by more people than us, on average}.
However,  
it is \textbf{not true} in general that 
\emph{we admire/hate people who admire/hate more people than us, on average}.

We gave in Theorem~\ref{thm:evinout} a spectral
condition 
that determines whether the relevant eigenvector centrality maintains the 
generalized paradox for directed networks. 
In this unsymmetric setting, it would be of interest to find useful classes of network 
for which the spectral condition is satisfied, and also to identify power series 
centrality measures for which a generalized friendship paradox is always guaranteed, thereby
extending the sufficiency result in Theorem~\ref{thm:Kasmall}.

\bibliography{fofrefs}

\begin{thebibliography}{10}

\bibitem{BK13}
{\sc M.~Benzi and C.~Klymko}, {\em Total communicability as a centrality
  measure}, Journal of Complex Networks, 1 (2013), pp.~124--149.

\bibitem{BK15}
{\sc M.~Benzi and C.~Klymko}, {\em On the limiting behavior of
  parameter-dependent network centrality measures}, SIAM J. Matrix Anal. Appl.,
  36 (2015), pp.~686--706.

\bibitem{Bollen2017}
{\sc J.~Bollen, B.~Gon{\c{c}}alves, I.~van~de Leemput, and G.~Ruan}, {\em The
  happiness paradox: your friends are happier than you}, EPJ Data Science, 6
  (2017), p.~4.

\bibitem{B1}
{\sc P.~Bonacich}, {\em Factoring and weighting approaches to status scores and
  clique identification}, Journal of Mathematical Sociology, 2 (1972),
  pp.~113--120.

\bibitem{B2}
\leavevmode\vrule height 2pt depth -1.6pt width 23pt, {\em Power and
  centrality: a family of measures}, American Journal of Sociology, 92 (1987),
  pp.~1170--1182.

\bibitem{CF10}
{\sc N.~A. Christakis and J.~H. Fowler}, {\em Social network sensors for early
  detection of contagious outbreaks}, PLoS ONE, 5 (2010), p.~0012948.

\bibitem{EJ14}
{\sc Y.-H. Eom and H.-H. Jo}, {\em Generalized friendship paradox in complex
  networks: The case of scientific collaboration}, Scientific Reports, 4
  (2014), p.~4603.

\bibitem{E10}
{\sc E.~Estrada}, {\em Generalized walks-based centrality measures for complex
  biological networks}, Journal of Theoretical Biology, 263 (2010),
  pp.~556--565.

\bibitem{Estradabook}
{\sc E.~Estrada}, {\em The Structure of Complex Networks}, Oxford University
  Press, Oxford, 2011.

\bibitem{EHB12}
{\sc E.~Estrada, N.~Hatano, and M.~Benzi}, {\em The physics of communicability
  in complex networks}, Physics Reports, 514 (2012), pp.~89--119.

\bibitem{EHSiamRev}
{\sc E.~Estrada and D.~J. Higham}, {\em Network propeties revealed through
  matrix functions}, SIAM Review, 52 (2010), pp.~696--671.

\bibitem{Feld91}
{\sc S.~L. Feld}, {\em Why your friends have more friends than you do},
  American Journal of Sociology, 96 (1991), pp.~1464--1477.

\bibitem{Free}
{\sc L.~C. Freeman}, {\em Centrality networks: I. conceptual clarifications},
  Social Networks, 1 (1979), pp.~215--239.

\bibitem{GMCCF14}
{\sc M.~Garcia-Herranz, E.~Moro, M.~Cebrian, N.~A. Christakis, and J.~H.
  Fowler}, {\em Using friends as sensors to detect global-scale contagious
  outbreaks}, PLoS ONE, 9 (2014), p.~0092413.

\bibitem{Gr14}
{\sc T.~U. Grund}, {\em Why your friends are more important and special than
  you think}, Sociological Science, 1 (2014), p.~128–140.

\bibitem{Hodas13icwsm}
{\sc N.~O. Hodas, F.~Kooti, and K.~Lerman}, {\em Friendship paradox redux: Your
  friends are more interesting than you}, in Proceedings of the 7Th
  International AAAI Conference On Weblogs And Social Media (ICWSM), 2013.
\newblock Honorable mention paper.

\bibitem{HJ13}
{\sc R.~A. Horn and C.~R. Johnson}, {\em Matrix Analysis}, Cambridge University
  Press, Cambridge, 2nd~ed., 2013.

\bibitem{Ja}
{\sc M.~O. Jackson}, {\em The friendship paradox and systematic biases in
  perceptions and social norms}, Journal of Political Economy,  (to appear).

\bibitem{Ka}
{\sc L.~Katz}, {\em A new index derived from sociometric data analysis},
  Psychometrika, 18 (1953), pp.~39--43.

\bibitem{Kooti14icwsm}
{\sc F.~Kooti, N.~O. Hodas, and K.~Lerman}, {\em Network weirdness: Exploring
  the origins of network paradoxes}, in International Conference on Weblogs and
  Social Media (ICWSM), Mar. 2014.

\bibitem{KKF18}
{\sc V.~Kumar, D.~Krackhardt, and S.~Feld}, {\em Network interventions based on
  inversity: {L}everaging the friendship paradox in unknown network
  structures}, Working Paper.

\bibitem{KS18}
{\sc V.~Kumar and K.~Sudhir}, {\em Can friends seed more buzz and adoption?},
  Working Paper.

\bibitem{LMSM83}
{\sc J.~C. Lagarias, J.~E. Mazo, L.~A. Shepp, and B.~McKay}, {\em An inequality
  for walks in a graph}, SIAM Review, 25 (1983), pp.~580--582.

\bibitem{Newmanbook}
{\sc M.~E.~J. Newman}, {\em Networks: an Introduction}, Oxford Univerity Press,
  Oxford, 2010.

\bibitem{PMG17}
{\sc M.~M. Pires, F.~M. Marquitti, and P.~R. Guimar{\~a}es}, {\em The
  friendship paradox in species-rich ecological networks: {I}mplications for
  conservation and monitoring}, Biological Conservation, 209 (2017),
  pp.~245--252.

\bibitem{REG07}
{\sc J.~A. Rodr\'iguez, E.~Estrada, and A.~Guti\'errez}, {\em Functional
  centrality in graphs}, Linear and Multilinear Algebra, 55 (2007),
  pp.~293--302.

\bibitem{St12}
{\sc S.~Strogatz}, {\em Friends you can count on}, The New York Times, 17th
  September (2012).

\bibitem{Vigna}
{\sc S.~Vigna}, {\em Spectral ranking}, Network Science, 4 (2016),
  pp.~433--445.

\bibitem{ZJ01}
{\sc E.~W. Zuckerman and J.~T. Jost}, {\em What makes you think you're so
  popular? {S}elf-evaluation maintenance and the subjective side of the
  {"}friendship paradox{"}}, Social Psychology Quarterly, 64 (2001),
  pp.~207--223.

\end{thebibliography}
\bibliographystyle{siam}

\end{document}